\DeclareMathOperator\inter{int}
\DeclareMathOperator*{\argmax}{arg\,max}
\DeclareMathOperator*{\conv}{conv}
\newtheorem{theorem}{Theorem}[section]
\newtheorem{proposition}[theorem]{Proposition}
\newtheorem{lemma}[theorem]{Lemma}
\newtheorem{remark}[theorem]{Remark}
\newtheorem{corollary}[theorem]{Corollary}
\newtheorem{claim}[theorem]{Claim}
\newtheorem{definition}[theorem]{Definition}
\author{Konstantin von Beringe \and Mark Whitmeyer\thanks{Arizona State University. We thank Ludvig Sinander for alerting us to the Dutch Book justification for Bayesianism. We are also grateful to Geoffroy de Clippel and Collin Raymond for suggestions that led to this paper and to Joseph Whitmeyer for his feedback. Emails: \href{mailto:kvonberi@asu.edu}{kvonberi@asu.edu} and \href{mailto:mark.whitmeyer@gmail.com}{mark.whitmeyer@gmail.com}.}}
\title{The Perils of Overreaction}
\begin{document}

\begin{abstract}
In order to study updating rules, we consider the problem of a malevolent principal screening an imperfectly Bayesian agent. We uncover a fundamental dichotomy between underreaction and overreaction to information. If an agent's posterior is farther away from the prior than it should be under Bayes' law, she can always be exploited by the principal to an unfettered degree: the agent's \textit{ex ante} expected loss can be made arbitrarily large. In stark contrast, an agent who underreacts (whose posterior is closer to the prior than the Bayesian posterior) cannot be exploited at all.
\end{abstract}

\newpage


\section{Exploiting Non-Bayesians}

A classical justification for Bayesianism is a Dutch Book argument: a malevolent (or at least selfish) bookie can offer a non-Bayesian a sequence of bets that guarantee the bookie a strictly positive profit (\cite*{teller1973conditionalization, lewis1999papers, skyrms1987dynamic}). Crucially, this bookie has a significant amount of power: he offers bets both before and after the arrival of information, observes the information himself, and knows the agent's mistaken belief.

In this paper, in order to scrutinize and compare updating rules, we explore a similarly adversarial scenario.\footnote{In the same vein, \cite*{echenique2011money} and \cite*{lanier2024money} illustrate how money-pumps are useful in characterizing the magnitude of revealed preference violations.} Our environment contains a principal and an agent. They start with a common prior about an unknown state of the world, and the agent alone privately observes the realization of a signal. The agent updates her prior through a possibly non-Bayesian updating rule. Then, at this \textit{interim} stage, the principal offers the agent a decision problem consisting of a collection of actions and a utility function that delivers a state- and action-dependent payoff to the agent.

The principal's goal is to exploit the agent and make the agent's \textit{ex ante} expected payoff as low as possible. Moreover, because the principal is unaware of the agent's private information, he faces a \textit{screening} problem: generating a decision problem is just the offering to the agent of a menu of actions.

What makes the problem interesting is that the agent's choices are shaped by her possibly non-Bayesian posteriors. That is, upon observing a piece of information, she chooses an action according to her current belief, which is potentially produced by means other than Bayes' law. The principal knows the updating rule, but does not observe the signal realization.

We distinguish between two varieties of reaction to new information; overreaction and underreaction.

\medskip

\noindent \textbf{Overreaction:} Given a prior and a signal, an agent overreacts to a piece of information (a signal realization) if the Bayesian posterior lies on a line segment between the prior and the agent's posterior. That is, the agent overreacts to a signal realization if her posterior is further away from the prior than the Bayesian posterior. 

\medskip

\noindent \textbf{Underreaction:} An agent underreacts to a piece of information if her posterior lies on a line segment between the prior and the Bayesian posterior. Underreaction to a signal realization is when the agent's posterior lies closer to the prior than the Bayesian posterior.

\medskip

We discover that there is a massive difference between the agent's welfare in these two classes. If there exists a signal realization that produces an overly extreme posterior--formally, that produces a posterior that lies outside of the closed convex hull of the support of the Bayesian posteriors--the principal can exploit the agent to an arbitrarily large degree. Because the posterior is so extreme, the principal can offer the agent a single action that no Bayesian would take, but that the agent with the extreme (and incorrect) posterior would. By scaling up the penalties to taking this action in ``bad states'' the agent's expected loss is limitless. A corollary of this result is that if an agent overreacts to a signal realization, she can be exploited.

On the other hand, if an agent underreacts to information, if all of her posteriors are (weakly) closer to the prior than the Bayesian posteriors, she cannot be exploited. The agent's \textit{ex ante} expected payoff cannot be worse than that from her outside option (which we normalize to zero). The explanation for this impossibility is more subtle than that for the overreaction result. If the agent can be exploited, there must be some signal realization that leads her to take the ``wrong'' action from the perspective of the Bayesian agent who sees that signal realization. But for an underreacting agent, this means that this ``wrong action'' is attractive at the prior and therefore at least one other Bayesian posterior. To put differently, any action that hurts the agent following some signal realization benefits her at another, and the latter gain necessarily outweighs the former loss. This is a bit of an over-simplification but the essence is that the ``direction'' of a mistake for an underreacting agent precludes exploitation.

Our results suggest a comparative benefit to underreaction versus overreaction. Interestingly, \cite*{epstein2010non} also discover an advantage to underreaction, though in a completely different sense. They examine the long-run beliefs of an agent who repeatedly obtains information. An underreacting agent eventually arrives at the truth, whereas an overreacter may not. In contrast, we show that an underreacter's mistakes are much less costly than an overreacter's.

One of the justifications for studying updating rules other than Bayes' law is that people, by and large, are not perfect adherents to Bayes' law. Importantly, as stressed by \cite*{benjamin2019errors} and emphasized by \cite*{ba2022over}, in scenarios in which the environment is relatively transparent--namely, the prior and signal are explicitly specified--agents tend to underreact to information.\footnote{This stands in sharp contrast to the survey responses of professional forecasters, households, and investors, who overreact to news (\cite*{bordalo2022overreaction}). \cite*{ba2022over}'s thesis that this trend is driven by complexity is compelling.} As \cite*{benjamin2019errors} states bluntly, 

\medskip

\noindent ``\textbf{Stylized Fact 1} Underinference is by far the dominant direction of bias.''

\medskip

\noindent A contribution of our paper, therefore, is to provide a justification for why underreaction is so dominant--it is \textit{safer}.


\section{Screening By a Malevolent Principal}

There is an unknown state of the world \(\theta\), drawn according to a full-support prior \(\mu\) from some finite set of states \(\Theta\). The cardinality of \(\Theta\) is \(n \geq 2\) and \(\Delta\left(\Theta\right) \subset \mathbb{R}^{n-1}\) denotes the \(n\)-state probability simplex. An agent obtains information about the state by observing the realization of a signal \(\pi \colon \Theta \to \Delta (S)\), where \(S\) is a finite set of signal realizations. At the interim stage--\textit{viz.,} after the agent observes the signal realization but before the state is known--a principal offers the agent a menu of state dependent payoffs, i.e., a decision problem. 

Formally, the principal offers the agent \(\left(A,u\right)\), where \(A\) is a compact set of actions and \(u \colon A \times \Theta \to \mathbb{R}\) is a collection of state- and action-dependent payoffs (in utils). The agent selects at most one of the actions \(a \in A\), with her outside option's payoff--that that she obtains from taking none of the proffered actions--normalized to \(0\). We denote the outside option action by \(\varnothing\) and extend any \(u\) offered by the principal to \(A \cup \left\{\varnothing\right\}\) by defining \(u(\varnothing, \theta) = 0\) for all \(\theta \in \Theta\).

The principal shares the agent's prior, and given this, wishes to minimize the agent's \textit{ex ante} expected payoff. What makes the problem nontrivial, and what potentially gives the principal scope for exploiting the agent, is that the agent's updating rule may not be Bayes' law. That is, given any signal realization \(s\), the agent's posterior \(\hat{x}(s)\) may not equal \(x(s)\), where
\[x(s)(\theta) \coloneqq \frac{\mu(\theta) \pi\left(\left.s\right|\theta\right)}{\sum_{\theta \in \Theta} \mu(\theta) \pi\left(\left.s\right|\theta\right)}\]
is the Bayesian posterior. We specify that the signal \(\pi\) is known to the principal, as is how the agent interprets signal realizations,\footnote{That is, we specify that if the principal could observe \(s\) she would know the agent's posterior belief.} but that the signal realization, \(s\), itself is observed only by the agent.

Consequently, the principal faces a potentially non-trivial \textit{screening} problem. We specify further that \(\pi\) is such that the Bayesian posteriors are affinely independent,\footnote{A justification for this is that any Bayes-plausible (\cite*{kam}) distribution over (Bayesian) posteriors can be obtained by randomizing over Bayes-plausible distributions with affinely independent support.} and that each \(s \in S\) realizes with strictly positive probability. That is, \(\sum_{\theta \in \Theta} \mu(\theta) \pi\left(\left.s\right|\theta\right) > 0\) for all \(s \in S\).

All in all, here is the timing of the scenario:
\begin{enumerate}
    \item \(\theta\) is drawn according to \(\mu \in \inter \Delta(\Theta)\) and is observed by neither principal nor agent.
    \item \(s\) is drawn from \(S\) according to \(\pi\left(\left.\cdot\right|\theta\right)\) and is observed privately by the agent.
    \item The principal offers the agent a decision problem \(\left(A,u\right)\).
    \item The agent chooses a single element from \(A \cup \left\{\varnothing\right\}\).
    \item Payoffs realize.
\end{enumerate}
The agent's payoff is her \textit{ex ante} expected payoff
\[\sum_{\theta \in \Theta} \sum_{s \in S}\mu(\theta)\pi\left(\left.s\right|\theta\right) u(a^{*}(s),\theta)\text{,}\]
where \(a^*(s)\) is (without loss of generality)\footnote{Please refer to the discussion at the beginning of \(\S\)\ref{mainresults}.} the principal-preferred selection from
\[\argmax_{a \in A \cup \left\{\varnothing\right\}}\mathbb{E}_{\hat{x}(s)}u(a,\theta)\text{.}\]
The principal wishes to infimize the agent's payoff.

\begin{definition}
    Given \(\pi\) and \(\mu\), an agent can be \textcolor{ForestGreen}{Exploited} if for any \(K > 0\), the principal can offer the agent a decision problem yielding her an \textit{ex ante} expected payoff of \(-K\).
\end{definition}
Our imposition that the support of the distribution of Bayesian posteriors is affinely independent implies there are at most \(n\) posteriors. We let the agent's type denote her Bayesian posterior. Henceforth, in our notation, we also suppress the dependence of the posteriors on the signal realization, i.e., \(x_i = x(s_i)\) and \(\hat{x}_i = \hat{x}(s_i)\), where beliefs without hats are those produced by Bayes' law.

Let us make two easy observations. The first is an immediate consequence of the fact that the agent's payoff can be scaled without affecting incentives.
\begin{remark}
    An agent can be exploited if and only if the principal can offer the agent a decision problem yielding her an \textit{ex ante} expected payoff of \(-K\) for some \(K > 0\).
\end{remark}
The second remark is the revelation principle, more-or-less. \textit{Viz.}, we can always prune excess actions and reduce the agent's posterior-dependent choice to one that is deterministic.
\begin{remark}
    If an agent can be exploited, she can be exploited by a decision problem with at most \(n\) actions in which the agent's choice of action following any signal realization is deterministic.
\end{remark}

\section{Over- Versus Underreaction}

Given a collection of posteriors \(\left\{x_1, \dots, x_t\right\}\) let \(\conv \left\{x_1, \dots, x_t\right\}\) denote the closed convex hull of this set.
\begin{definition}
    Fix \(\mu\) and \(\pi\). An agent \textcolor{ForestGreen}{Underreacts to realization \(s\)} if there exists a \(\lambda \in \left[0,1\right]\) such that \(\hat{x}_s = \lambda \mu + (1-\lambda)x_{s}\). An agent \textcolor{ForestGreen}{Underreacts to Information} if she underreacts to each realization \(s \in S\).
\end{definition}
Our definition of overreaction to information is the easily-anticipated mirror:
\begin{definition}
    Fix \(\mu\) and \(\pi\). An agent \textcolor{ForestGreen}{Overreacts to realization \(s\)} if there exists a \(\lambda \in \left(0,1\right)\) such that \(x_{s} = \lambda \mu + (1-\lambda)\hat{x}_s\). An agent \textcolor{ForestGreen}{Overreacts to Information} if she overreacts to each realization \(s \in S\).
\end{definition}

\subsection{Some Non-Bayesian Updating Rules}

Before exploring the main results in \(\S\)\ref{mainresults}, let us briefly discuss some well-known updating rules. As highlighted by \cite*{declippel}, an oft-observed non-Bayesian rule observed in the literature is \textbf{\cite*{grether}'s \(\alpha-\beta\) Model of Updating} (see, e.g., \cite*{benjamin2019errors}, \cite*{benjamin2019base}, \cite*{augenblick2021belief}, and \cite*{raymond16}). As revealed by \cite*{declippel}, an \(\alpha-\beta\) distorted posterior \(\hat{x}_s\) is a function of the Bayesian posterior \(x_s\), which simplifies to
\[\hat{x}_s(x_s) = \frac{x_s^\beta}{x_s^\beta + (1-x_s)^\beta}\]
when there are just two states and the prior is \(1/2\).
Evidently, if \(\beta > 1\), the agent overreacts (\(x_s < \hat{x}_s\) if \(x_s > 1/2\) and \(x_s > \hat{x}_s\) if \(x_s < 1/2\)) and underreacts if \(\beta < 1\). 

An agent exhibits \textbf{Extreme Belief Aversion} (\cite*{ducharme1968intuitive}, \cite*{ducharme1970response}, and \cite*{raymond16}) if she is averse to holding beliefs close to certainty. This yields, from the perspective of our framework, underreaction to information, particularly when the agent's signal is extremely informative. Our main results, therefore, provide a rationale for such behavior:\footnote{\cite*{whitmeyer2023bayes} shows that when there are two states, an agent who is averse to extreme beliefs nevertheless prefers more information (\cite*{blackwell}, \cite*{blackwell2}) to less. Notably, extreme belief aversion is the only class of non-Bayesian updating rules compatible with the Blackwell order.} extreme belief aversion is relatively innocuous, and better than the opposite variety.

In \textbf{\cite*{rabin1999first}'s Model of Confirmatory Bias} (with two states and two signal realizations) an agent correctly processes the signal realization that confirms her prior but only obtains the correct posterior following the signal realization that contradicts her prior with some probability. With its complement, her posterior following this conflicting signal realization is the confirming-realization posterior instead. Formally, given \(\mu > 1/2\) and Bayesian posteriors \(x_H\) and \(x_L\) (\(0 \leq x_L < 1/2 < x_H \leq 1\)), \(\hat{x}_H = x_H\) and \(\hat{x}_L = x_H\) with some probability \(q \in \left(0,1\right)\) and \(\hat{x}_L = x_L\) otherwise. Though this type of random error does not fit within our main specification, we study a generalized version of this environment in \(\S\)\ref{rabin}.

An agent's malformed posterior could also come from having a \textbf{Misspecified Prior}, from the perspective of the principal, at least. Let us again specialize to two states and binary experiments. In this case, there is always one signal realization \(s\) with respect to which the agent overreacts. If she is not too misspecified, she underreacts to the other signal realization, but otherwise does not.


\subsection{Main Results}\label{mainresults}

Throughout this section, it is important to keep in mind that the agent chooses her action following a signal realization based on her posterior \(\hat{x}_s\) and not the Bayesian posterior \(x_s\). Furthermore, in principle, how we break ties when the agent is indifferent between actions could affect our results. Here is why how we break such ties does not matter. First, in our sufficiency results for when the agent can be exploited, in a decision problem that exploits the agent, incentives can be made strict for the posterior that matters, eliminated the importance of our tie-breaking rule. Second, our necessity results for the agent to be exploited, we break ties in a principal-optimal way as a different rule would only make it more difficult for him to exploit the agent.

Our first result identifies a sufficient condition for exploitation. Fixing \(\mu\) and \(\pi\), and letting \(X \coloneqq \left\{x_1, x_2, \dots, x_n\right\}\) denote the set of Bayesian posteriors,
\begin{lemma}\label{mainlemma}
    If there exists an \(s \in S\) for which \(\hat{x}_s \notin \conv X\), the agent can be exploited.
\end{lemma}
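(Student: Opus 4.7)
The plan is to use strict separation to manufacture a single action that the agent gladly takes at realization $s$ (because her subjective posterior is on the ``wrong side'' of the separator) but that looks bad under every Bayesian posterior; scaling the stakes then drives her \textit{ex ante} payoff to $-\infty$.

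First, I would apply the strict separating hyperplane theorem to the compact convex set $\conv X$ and the point $\hat{x}_s \notin \conv X$: there exist $v \in \mathbb{R}^n$ and $c \in \mathbb{R}$ with $\langle \hat{x}_s, v \rangle > c > \langle x, v \rangle$ for every $x \in \conv X$. Because beliefs sum to one, replacing $v$ by $v - c \mathbf{1}$ yields a payoff vector $u_0 \in \mathbb{R}^n$ with $\mathbb{E}_{\hat{x}_s} u_0 > 0$ and $\mathbb{E}_x u_0 < 0$ for every $x \in \conv X$; in particular, $\mathbb{E}_{x_{s'}} u_0 < 0$ at every Bayesian posterior.

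Next, for any $K > 0$, I would offer the single-action menu $A = \{a\}$ with $u(a, \theta) := K \cdot u_0(\theta)$. At realization $s$, the agent's subjective expected payoff $K \cdot \mathbb{E}_{\hat{x}_s} u_0$ is strictly positive, so she strictly prefers $a$ over $\varnothing$, and her Bayesian-conditional contribution is $K \cdot \mathbb{E}_{x_s} u_0 < 0$. At any other realization $s'$, the agent chooses one of $a$ or $\varnothing$; regardless of her choice (and of the principal-optimal tie-breaking), the Bayesian-conditional contribution is at most $0$, since $\varnothing$ yields $0$ and $a$ yields $K \cdot \mathbb{E}_{x_{s'}} u_0 < 0$. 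Summing over realizations, the \textit{ex ante} expected payoff is bounded above by
\[
\pi(s) \cdot K \cdot \mathbb{E}_{x_s} u_0,
\]
which tends to $-\infty$ as $K \to \infty$, because $\pi(s) > 0$ and $\mathbb{E}_{x_s} u_0 < 0$ are fixed.

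The only substantive ingredient is the separating hyperplane step together with the observation that every Bayesian posterior lies in $\conv X$, which makes the constructed action universally bad from the Bayesian viewpoint. This is precisely why the behavior of the subjective posteriors $\hat{x}_{s'}$ at the other signal realizations need not be controlled: the agent exercising the ``trick'' action there only helps the principal. The main (very mild) obstacle is just keeping separate the subjective belief that drives the agent's choice from the Bayesian belief that governs actual expected payoffs.
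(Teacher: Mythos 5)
Your proof is correct and takes essentially the same approach as the paper: strictly separate \(\hat{x}_s\) from \(\conv X\), offer a single action whose expected payoff is the (scaled) affine separating functional, and let the scale factor blow up the loss. If anything, your treatment of the other realizations is slightly cleaner than the paper's --- you bound their contribution by \(0\) directly rather than invoking the paper's ``without loss of generality \(\hat{x}_{s'} = x_{s'}\)'' shortcut --- but the substance is identical.
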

\begin{proof}
    Suppose there exists an \(s \in S\) for which \(\hat{x}_s \notin \conv X\). Without loss of generality \(\hat{x}_{s'} = x_{s'}\) for all \(s' \in S \setminus \left\{s\right\}\). As \(\hat{x}_s \notin \conv X\),  \(\hat{x}_s\) and \(\conv X\) can be strictly separated by a hyperplane
    \(H_{\alpha, \beta} \coloneqq \left\{x \in \left.\mathbb{R}^{n-1} \right| \alpha \cdot x = \beta\right\}\). Suppose the principal offers a single-action decision problem \(\left(\left\{a\right\}, u\right)\) where \(u\) is such that 
    \(\mathbb{E}_x u(a,\theta) = \lambda(\alpha \cdot x - \beta)\) (\(\lambda > 0\)) and \(\alpha \cdot \hat{x}_s - \beta > 0\). Observe that this implies \(\alpha \cdot x_{s'} - \beta < 0\) for all  \(s' \in S\). The agent's \textit{ex ante} expected payoff is 
    \[\lambda \sum_{\theta \in \Theta} \mu(\theta)\pi(\left.s\right|\theta) \left(\alpha \cdot x_{s} - \beta\right)\text{.}\]
    Then for any \(K > 0\) we just set 
    \[\lambda = \frac{-K}{\sum_{\theta \in \Theta} \mu(\theta)\pi(\left.s\right|\theta) \left(\alpha \cdot x_{s} - \beta\right)}\text{.}\]
    Exploitation achieved.\end{proof}
Indeed, observe that a single-action decision problem will do. The key part of the construction is that this action is strictly suboptimal at every one of the agent's Bayesian posteriors, yet her extreme posterior makes her strictly prefer to take it following some signal realization. \textbf{Big mistake}. We also wish to highlight that the affine independence of \(X\) is not needed for this result: so long as there is some \(\hat{x}_s\) outside of \(\conv X\), the result goes through.

This lemma implies a peril of overreaction. Note that this corollary \textit{does} rely on the affine independence of \(X\). If the set is not affinely independent, the agent can overreact to some \(s\) yet \(\hat{x}_s \in \conv X\).
\begin{corollary}
    Fix \(\mu\) and \(\pi\). If an agent overreacts to some realization \(s\), she can be exploited.
\end{corollary}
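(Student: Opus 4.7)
The plan is to reduce the corollary to Lemma~\ref{mainlemma} by showing that overreaction to \(s\) forces \(\hat{x}_s \notin \conv X\). Once that inclusion fails, Lemma~\ref{mainlemma} supplies exploitation immediately, so the entire content of the corollary lies in ruling out \(\hat{x}_s \in \conv X\).

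First, I would unpack the definition of overreaction. If \(x_s = \lambda \mu + (1-\lambda)\hat{x}_s\) for some \(\lambda \in (0,1)\), then solving for \(\hat{x}_s\) gives the affine combination \(\hat{x}_s = \tfrac{1}{1-\lambda}\,x_s - \tfrac{\lambda}{1-\lambda}\,\mu\). Next I would invoke Bayes-plausibility to write \(\mu = \sum_i p_i x_i\), where \(p_i \coloneqq \sum_\theta \mu(\theta)\pi(s_i\mid\theta) > 0\) for each \(i\) by the standing positivity assumption on \(\pi\). Substituting this into the previous expression yields \(\hat{x}_s = \sum_i c_i x_i\) with \(c_s = \tfrac{1 - \lambda p_s}{1-\lambda}\) and \(c_i = -\tfrac{\lambda p_i}{1-\lambda} < 0\) for every \(i \neq s\); one checks that \(\sum_i c_i = 1\), so this is a genuine affine combination.

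Here the hypothesis of affine independence of \(X\) does the real work: it guarantees that the affine coefficients of any point in \(\mathrm{aff}\,X\) relative to \(X\) are \emph{unique}. Were \(\hat{x}_s \in \conv X\), uniqueness would force every \(c_i\) to be nonnegative, contradicting \(c_i < 0\) for \(i \neq s\). Hence \(\hat{x}_s \notin \conv X\), and Lemma~\ref{mainlemma} delivers the conclusion.

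The only subtle point — and really the only obstacle — is the appeal to affine independence. This is exactly the caveat the excerpt flags: without it, \(\hat{x}_s\) could perfectly well land inside \(\conv X\) even when the agent overreacts, because the negative-coefficient representation could coexist with a nonnegative one. With affine independence in hand, however, the argument is entirely routine algebra plus Bayes-plausibility of the prior.
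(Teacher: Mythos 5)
Your proof is correct and follows exactly the route the paper intends: the corollary is stated as an immediate consequence of Lemma~\ref{mainlemma}, with the paper's accompanying remark that affine independence of \(X\) is what rules out \(\hat{x}_s \in \conv X\). Your computation of the affine coefficients via Bayes-plausibility and the appeal to uniqueness of barycentric coordinates is precisely the argument the paper leaves implicit.
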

On the other hand, it is impossible to exploit an agent who underreacts to information. Letting \(\alpha^a \cdot x - \beta^a\) denote the expected payoff in belief \(x\) to action \(a\), here is the lemma central to establishing Proposition \ref{maintheorem}.
\begin{lemma}\label{prunelemma}
    If an underreacting agent can be exploited, there exists a contract that exploits the agent that does not contain any action \(a\) for which \(\alpha^a \cdot x_i - \beta^a > 0 \geq \alpha^a \cdot \hat{x}_i - \beta^a\) for some type \(x_i\).
\end{lemma}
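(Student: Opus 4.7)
The plan is to prove the lemma by iteratively pruning bad actions from an exploiting contract. Fix any exploiting $(A, u)$, and suppose $a \in A$ is bad with witness type $x_i$: $\alpha^a \cdot x_i - \beta^a > 0 \geq \alpha^a \cdot \hat{x}_i - \beta^a$. My first observation is that, under principal-optimal tie-breaking, type $i$ does not select $a$. Since $\alpha^a \cdot \hat{x}_i - \beta^a \leq 0$, she weakly prefers $\varnothing$ to $a$ under $\hat{x}_i$; if strict she takes $\varnothing$, and if her distorted payoff from $a$ equals zero, the principal breaks the tie toward $\varnothing$ because $\alpha^a \cdot x_i - \beta^a > 0$ means $a$ yields a strictly larger Bayesian payoff than $\varnothing$ at type $i$, which is worse for the principal.

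A second consequence follows from expanding underreaction, $\hat{x}_i = \lambda_i \mu + (1-\lambda_i) x_i$: this gives $\lambda_i (\alpha^a \cdot \mu - \beta^a) + (1-\lambda_i)(\alpha^a \cdot x_i - \beta^a) \leq 0$, so combined with $\alpha^a \cdot x_i - \beta^a > 0$ we must have $\lambda_i > 0$ (else $\hat{x}_i = x_i$ and the bad inequality fails) and $\alpha^a \cdot \mu - \beta^a \leq 0$. Every bad action is thus weakly worse than the outside option under the prior.

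The main step is to show that $(A \setminus \{a\}, u)$ still yields a strictly negative \emph{ex ante} Bayesian payoff. Types $j$ with $a^*(s_j) \neq a$ are unaffected; for types $j \neq i$ with $a^*(s_j) = a$, the agent reroutes to her next-best option $a^{**}(s_j) \in (A \setminus \{a\}) \cup \{\varnothing\}$, which must satisfy $\alpha^{a^{**}(s_j)} \cdot \hat{x}_j - \beta^{a^{**}(s_j)} \geq 0$ by individual rationality and $\alpha^a \cdot \hat{x}_j - \beta^a \geq \alpha^{a^{**}(s_j)} \cdot \hat{x}_j - \beta^{a^{**}(s_j)}$ by the original distorted IC. Expanding both via $\hat{x}_j = \lambda_j \mu + (1-\lambda_j) x_j$, and leveraging $\alpha^a \cdot \mu - \beta^a \leq 0$, I would extract bounds on the change $[\alpha^{a^{**}(s_j)} \cdot x_j - \beta^{a^{**}(s_j)}] - [\alpha^a \cdot x_j - \beta^a]$ for each rerouting type, then aggregate these using the signal probabilities to show the new \emph{ex ante} payoff stays negative.

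The main obstacle is precisely this aggregation: rerouting can shift Bayesian payoffs in either direction, so the accounting requires care. A fallback route, should direct pruning encounter gaps, is to replace $a$ by a modified $\tilde{a}$ with $u(\tilde{a}, \theta) = u(a, \theta) + c\,w(\theta)$, where $w \in \mathbb{R}^n$ satisfies $w \cdot \hat{x}_i = 0$ and $w \cdot x_i < 0$ (possible since $\hat{x}_i \neq x_i$). Tuning $c > 0$ drives $\alpha^{\tilde{a}} \cdot x_i - \beta^{\tilde{a}}$ down to zero while leaving $\alpha^{\tilde{a}} \cdot \hat{x}_i - \beta^{\tilde{a}}$ unchanged, at which point side effects on other types are controlled by continuity in $c$. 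Iterating either strategy over the finite set of bad actions completes the construction.
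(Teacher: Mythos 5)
Your two preliminary observations are correct and match the paper's Claim: for a ``bad'' action $a$ with witness $x_i$, underreaction forces $\alpha^a\cdot\mu-\beta^a\le\alpha^a\cdot\hat{x}_i-\beta^a\le 0$, so every bad action is weakly dominated by the outside option at the prior. But the heart of the lemma is precisely the step you label ``the main obstacle'' and leave open: showing that after pruning, the \emph{ex ante} Bayesian payoff is still strictly negative. Trying to bound the change $[\alpha^{a^{**}}\cdot x_j-\beta^{a^{**}}]-[\alpha^a\cdot x_j-\beta^a]$ directly is the wrong accounting; the paper instead routes the comparison through the \emph{distorted} payoffs. Concretely, it prunes the entire set $A^{\dagger}$ of bad actions at once, and for each type $x_j$ whose choice changes it chains
\[
\tilde{\alpha}_j\cdot x_j-\tilde{\beta}_j\;\le\;\tilde{\alpha}_j\cdot\hat{x}_j-\tilde{\beta}_j\;\le\;\alpha_j\cdot\hat{x}_j-\beta_j\;\le\;\alpha_j\cdot x_j-\beta_j,
\]
where $(\alpha_j,\beta_j)$ is the old (bad) choice and $(\tilde{\alpha}_j,\tilde{\beta}_j)$ the new one. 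The middle link is revealed preference at $\hat{x}_j$; the last link holds because the old action is bad yet chosen, so its payoff at $\hat{x}_j$ is $\ge 0\ge$ its payoff at $\mu$, and underreaction then pushes its payoff at $x_j$ even higher; the first link invokes the complementary fact (the paper's Claim \ref{claim1} and its consequences) that a \emph{non-bad} action's distorted payoff weakly exceeds its Bayesian payoff in the relevant sign configuration. Thus every switching type's Bayesian payoff weakly falls and the total stays negative. Note that simultaneous pruning is not cosmetic: removing one bad action at a time, as you propose, can reroute a type onto \emph{another} bad action, and then none of these inequalities is available for that step.

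Your fallback construction also does not rescue the argument as stated. Driving $\alpha^{\tilde{a}}\cdot x_i-\beta^{\tilde{a}}$ down to zero generally requires a \emph{large} $c$, so ``side effects on other types are controlled by continuity in $c$'' has no force; the perturbation moves $\alpha^{\tilde{a}}\cdot\hat{x}_j-\beta^{\tilde{a}}$ and $\alpha^{\tilde{a}}\cdot x_j-\beta^{\tilde{a}}$ for every $j\ne i$ and can both create new witnesses of badness and alter which types select $\tilde{a}$. As it stands, the proposal establishes the easy facts but not the lemma.
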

\begin{proof}
    As we could just prune any actions that are not chosen by any type, we start with a decision problem that exploit the agent in which each action is chosen by at least one type. The set of actions available to the agent is \(A\). The following observation will be useful.
    \begin{claim}\label{claim1}
        If \(a \in A\) is such that \(\alpha^a \cdot x_i - \beta^a > 0 \geq \alpha^a \cdot \hat{x}_i - \beta^a\) for some type \(x_i\), we cannot have
        \(\alpha^a \cdot \hat{x}_j - \beta^a \geq 0 > \alpha^a \cdot x_j - \beta^a\) for any \(x_j \in X\).
    \end{claim}
    \begin{proof}
        Because the agent is underreacting,
        \(\alpha^a \cdot x_i - \beta^a > 0 \geq \alpha^a \cdot \hat{x}_i - \beta^a\) for some type \(x_i\) implies that
        \[0 \geq \alpha^a \cdot \hat{x}_i - \beta^a \geq \alpha^a \cdot \mu - \beta^a \text{.}\]
        Suppose for the sake of contradiction that \(\alpha^a \cdot \hat{x}_j - \beta^a \geq 0 > \alpha^a \cdot x_j - \beta^a\) for some \(x_j \in X\). Again, by underreaction, this implies 
        \[\alpha^a \cdot \mu - \beta^a \geq \alpha^a \cdot \hat{x}_j - \beta^a \geq 0 \text{,}\]
        a contradiction.\end{proof}
        Let \(A^{\dagger} \subseteq A\) denote the subset of actions in the agent's decision problem for which \(\alpha^a \cdot x_i - \beta^a > 0 \geq \alpha^a \cdot \hat{x}_i - \beta^a\) for some type \(x_i\), i.e., formally,
        \[A^{\dagger} \coloneqq \left\{a \in \left.A\right\vert \exists x_i \in X \colon \alpha^a \cdot x_i - \beta^a > 0 \geq \alpha^a \cdot \hat{x}_i - \beta^a\right\}\text{.}\]
        Without loss of generality we assume that \(A^{\dagger}\) is non-empty, or else we are done.

        Claim \ref{claim1} implies that for any \(a^{\dagger} \in A^{\dagger}\) and each \(x_i \in X\), either
        \[\label{claimlist}\tag{{\color{OrangeRed} \(\clubsuit\)}}
        \begin{split}
            &1. \quad \alpha^{a^{\dagger}} \cdot x_i - \beta^{a^{\dagger}} > 0 \geq \alpha^{a^{\dagger}} \cdot \hat{x}_i - \beta^{a^{\dagger}} \ \text{; or}\\
            &2. \quad 0 > \alpha^{a^{\dagger}} \cdot x_i - \beta^{a^{\dagger}} \geq \alpha^{a^{\dagger}} \cdot \hat{x}_i - \beta^{a^{\dagger}} \ \text{; or}\\
            &3. \quad \alpha^{a^{\dagger}} \cdot x_i - \beta^{a^{\dagger}} \geq \alpha^{a^{\dagger}} \cdot \hat{x}_i - \beta^{a^{\dagger}} > 0 \text{.}
        \end{split}\]
        Immediately, we see that we cannot have \(A^{\dagger} = A\), or else the agent's \textit{ex ante} expected payoff would be weakly positive, contradicting that she is being exploited. Accordingly \(A^{\dagger} \subset A\) (but remember \(A^{\dagger}\) is non-empty). Moreover for any \(a \in A \setminus A^{\dagger}\) and any \(x_i \in X\), either
        \[\label{claimlist2}\tag{\textcolor{OrangeRed}{\(\diamondsuit\)}}
        \begin{split}
            &1. \quad \alpha^a \cdot \hat{x}_i - \beta^a \geq 0 > \alpha^a \cdot x_i - \beta^a \ \text{; or}\\
            &2. \quad 0 > \alpha^{a} \cdot \hat{x}_i - \beta^{a} \geq \alpha^{a} \cdot x_i - \beta^{a} \ \text{; or}\\
            &3. \quad \alpha^{a} \cdot \hat{x}_i - \beta^{a} \geq \alpha^{a} \cdot x_i - \beta^{a} > 0 \text{.}
        \end{split}\]

        Because the agent is being exploited, we must have \[\label{eqpreprune}\tag{{\color{OrangeRed} \(\heartsuit\)}} 0 > \sum_{i=1}^{k} p_i \left(\alpha_i \cdot x_i - \beta_i\right) + {\color{SeaGreen} \sum_{i=k+1}^{n} p_i \left(\alpha_i \cdot x_i - \beta_i\right)}\text{,}\]
        where \(\alpha_i \cdot x - \beta_i\) is the expected payoff to type \(x_i\) from choosing optimally at belief \(\hat{x}_i\). Keep in mind that it is possible that a type \(x_i\) is choosing the outside option, in which case \(\alpha_i = \underbar{0}\) and \(\beta_i = 0\). Moreover, the set of types \(\left\{x_1, \dots, x_k\right\}\) (\(1 \leq k < n\)) are those whose choice of action (assuming any indifferences are broken in a principal-optimal way) is different when the set of available actions is \(A \setminus A^{\dagger}\) rather than \(A\).

        Suppose that the principal prunes away the set of actions \(A^{\dagger}\), offering the agent \(A \setminus A^{\dagger}\) instead. Observe that agent's payoff (breaking ties in a principal-optimal way) is now
        \[\label{newpay}\tag{\textcolor{OrangeRed}{\(\spadesuit\)}} \sum_{i=1}^{k} p_i \left(\tilde{\alpha}_i \cdot x_i - \tilde{\beta}_i\right) + {\color{SeaGreen} \sum_{i=k+1}^{n} p_i \left(\alpha_i \cdot x_i - \beta_i\right)}\text{,}\]
        where \(\tilde{\alpha}_i \cdot x - \tilde{\beta}_i\) is the expected payoff to type \(x_i\) (\(i \leq k\)) from choosing optimally at belief \(\hat{x}_i\) when the set of available actions is \(A \setminus A^{\dagger}\). 
        
        The green sums in Expressions \ref{eqpreprune} and \ref{newpay} are the same. Moreover, by the Consequences of Claim \ref{claim1} enumerated in Lists \ref{claimlist} and \ref{claimlist2}, we must have
        \[\tilde{\alpha}_i \cdot x_i - \tilde{\beta}_i \leq \tilde{\alpha}_i \cdot \hat{x}_i - \tilde{\beta}_i \leq \alpha_i \cdot \hat{x}_i - \beta_i \leq \alpha_i \cdot x_i - \beta_i\text{,}\]
        for each \(i \leq k\) and so
         \[\sum_{i=1}^{k} p_i \left(\tilde{\alpha}_i \cdot x_i - \tilde{\beta}_i\right) + {\color{SeaGreen} \sum_{i=k+1}^{n} p_i \left(\alpha_i \cdot x_i - \beta_i\right)} \leq \sum_{i=1}^{k} p_i \left(\alpha_i \cdot x_i - \beta_i\right) + {\color{SeaGreen} \sum_{i=k+1}^{n} p_i \left(\alpha_i \cdot x_i - \beta_i\right)} < 0 \text{,}\]
         completing the proof.\end{proof}

\begin{theorem}\label{maintheorem}
    Fix \(\mu\) and \(\pi\). If an agent underreacts to information, she cannot be exploited.
\end{theorem}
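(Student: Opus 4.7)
The plan is a direct algebraic contradiction argument that combines underreaction with incentive compatibility. Suppose the agent is exploited; by Remark 2.2 the assignment $\sigma(i)$ of each type $x_i$ to an action is without loss of generality deterministic and satisfies IC $f_{\sigma(i)}(\hat{x}_i) \geq f_{\sigma(j)}(\hat{x}_i)$ and IR $f_{\sigma(i)}(\hat{x}_i) \geq 0$, where I write $f_a(x) := \alpha^a \cdot x - \beta^a$, yet with $V := \sum_i p_i f_{\sigma(i)}(x_i) < 0$. Set $q_i := p_i/(1-\lambda_i)$, taking $\lambda_i < 1$ throughout (Lemma \ref{prunelemma} can be invoked to handle boundary actions if needed).

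The central ingredient is the identity
\[
\sum_j q_j \hat{x}_j \;=\; Q\,\mu, \qquad Q := 1 + \sum_j q_j \lambda_j,
\]
obtained by substituting $\hat{x}_j = \lambda_j \mu + (1-\lambda_j) x_j$ into the left-hand side and invoking Bayes-plausibility $\sum_j p_j x_j = \mu$. This says that the prior $\mu$ lies in the convex cone of the distorted posteriors---a consequence unique to underreaction. Solving $x_i = (\hat{x}_i - \lambda_i \mu)/(1-\lambda_i)$ and plugging into the ex ante payoff gives
\[
V \;=\; \sum_i q_i f_{\sigma(i)}(\hat{x}_i) \;-\; \sum_i q_i \lambda_i f_{\sigma(i)}(\mu).
\]

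It then remains to bound the second sum via IC. Writing $f_{\sigma(i)}(\mu) = Q^{-1}\sum_j q_j f_{\sigma(i)}(\hat{x}_j)$ by the identity, invoking $f_{\sigma(i)}(\hat{x}_j) \leq f_{\sigma(j)}(\hat{x}_j)$ pairwise, and using $\sum_i q_i \lambda_i = Q - 1$, a short double-sum manipulation yields
\[
\sum_i q_i \lambda_i f_{\sigma(i)}(\mu) \;\leq\; \tfrac{Q-1}{Q}\sum_j q_j f_{\sigma(j)}(\hat{x}_j).
\]
Substituting back gives $V \geq Q^{-1}\sum_i q_i f_{\sigma(i)}(\hat{x}_i) \geq 0$ by IR and positivity of the $q_i$'s, contradicting $V < 0$.

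The main obstacle is spotting the identity $\sum_j q_j \hat{x}_j = Q\mu$: it is the only place where the underreaction hypothesis does nontrivial work, and it is precisely the identity that would fail (with negative coefficients $q_i$) under overreaction, consistent with the paper's earlier dichotomy. Once this convex-cone structure is in hand, the remainder is routine IC-and-IR bookkeeping.
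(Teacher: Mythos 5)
Your argument is correct and complete whenever $\lambda_i<1$ for every realization, and it is a genuinely different route from the paper's. The paper proceeds by induction on the number of actions, with Lemma \ref{prunelemma} doing the heavy lifting of removing actions that are ``wrong at $x_i$ but unattractive at $\hat{x}_i$''; you instead get the conclusion in one shot from the identity $\sum_j q_j \hat{x}_j = Q\mu$ with $Q=\sum_j q_j = 1+\sum_j q_j\lambda_j$, which exhibits $\mu$ as a convex combination of the \emph{distorted} posteriors (weights $q_j/Q$) and hence lets you evaluate $f_{\sigma(i)}(\mu)$ through pairwise IC. I verified the algebra: the affine decomposition $f_a(x_i)=\tfrac{1}{1-\lambda_i}f_a(\hat{x}_i)-\tfrac{\lambda_i}{1-\lambda_i}f_a(\mu)$, the double-sum bound $\sum_i q_i\lambda_i f_{\sigma(i)}(\mu)\le \tfrac{Q-1}{Q}\sum_j q_j f_{\sigma(j)}(\hat{x}_j)$, and the conclusion $V\ge Q^{-1}\sum_i q_i f_{\sigma(i)}(\hat{x}_i)\ge 0$ all check out, using only IC, IR, and $q_i,\lambda_i\ge 0$. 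Your proof also needs neither the affine independence of $X$, nor Remark 2.2, nor any pruning, and it isolates exactly where underreaction enters (the nonnegativity of the coefficients $q_j$), which is a real expository gain over the induction.

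The one genuine gap is the boundary case $\lambda_i=1$, i.e.\ $\hat{x}_i=\mu$, which you flag but do not close: invoking Lemma \ref{prunelemma} does not help, because that lemma says nothing about types whose distorted posterior coincides with the prior, and your $q_i$ is simply undefined there. A naive limiting argument ($\lambda_i\uparrow 1$) also fails because the argmax correspondence is only upper hemicontinuous, so the assignment $\sigma$ need not survive the perturbation. This boundary is genuinely delicate rather than routine: when $\hat{x}_i=\mu$ the agent can be exactly indifferent between an action and $\varnothing$, and under the paper's principal-preferred tie-breaking two types with the same distorted posterior $\mu$ can be steered to different actions, each costly at its own Bayesian posterior (take two states, $\mu=1/2$, $x_1=0.9$, $x_2=0.1$, $\hat{x}_1=\hat{x}_2=1/2$, and a single action with $\mathbb{E}_x u = 1-2x$: type $1$ is sent to the action, type $2$ to $\varnothing$, and the ex ante payoff is strictly negative and scalable). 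So the case you set aside is precisely the case where the theorem requires either strict incentives or a more agent-friendly treatment of ties; it is also where the paper's own Claim \ref{claim1} argument degenerates (the chain of inequalities collapses to $f_a(\mu)=0$ rather than a contradiction). You should either restrict to $\lambda_i<1$ explicitly or add an argument handling exact indifference at $\mu$; as written, the deferral to Lemma \ref{prunelemma} is not a proof of that case.
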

\begin{proof}
    Suppose for all \(s \in S\) \(\hat{x}_s\) lies on the line segment between \(\mu\) and \(x_{s}\). 
    
    Our proof approach is via induction. We shall first show that a single-action decision problem is incapable of exploiting the underreacting agent. We will then argue that if a \(t\)-action (\(1 \leq t\)) decision problem cannot exploit the agent, neither can a \((t+1)\)-action decision problem.

    \medskip

    \noindent \textbf{Step 1. Base case: A single-action decision problem cannot exploit the agent.} As each \(\hat{x}_s\) lies on the line segment between \(\mu\) and \(x_{s}\), for any hyperplane \(H_{\alpha, \beta} \coloneqq \left\{x \in \left.\mathbb{R}^{n-1} \right| \alpha \cdot x = \beta\right\}\) that strictly separates some \(x_{s'}\) from \(\hat{x}_{s'}\), for which \(\alpha \cdot \hat{x}_{s'} - \beta > 0\), we must have \(\alpha \cdot \mu - \beta > 0\). Accordingly, as the average posterior is the prior \(\mu\), the agent's payoff is bounded (weakly) below by \(\alpha \cdot \mu - \beta > 0\).

    \medskip
    
    \noindent \textbf{Step 2. Induction hypothesis and step: if a \(t\)-action (\(1 \leq t\)) decision problem cannot exploit the agent, neither can a \((t+1)\)-action decision problem.} Suppose for the sake of contradiction that there exists a \((t+1)\)-action decision problem that exploits the agent. By Lemma \ref{prunelemma}, we assume without loss of generality that there does not exist an action \(a\) for which \(\alpha^a \cdot x_i - \beta^a > 0 \geq \alpha^a \cdot \hat{x}_i - \beta^a\) for some type \(x_i\). 
    
    Denote the actions in this decision problem by \(\tilde{a}, a^1, \dots, a^t\), keeping in mind that the set of actions available to the agent is \(\left\{\tilde{a}, a^1, \dots, a^t, \varnothing\right\}\). In the specified \((t+1)\)-action decision problem, let \(\left\{x_1, \dots, x_k\right\}\) (\(1 \leq k\)) be the types who select action \(\tilde{a}\) and \(\left\{x_s, \dots, x_n\right\}\), those who select \(\varnothing\). By assumption,
    \[\label{eq1}\tag{\textcolor{OrangeRed}{\(\bigstar\)}}0 > \sum_{i = 1}^{k} p_i (\tilde{\alpha} \cdot x_i - \tilde{\beta}) + {\color{SeaGreen} \sum_{i=k+1}^{s-1} p_i (\alpha_i \cdot x_i - \beta_i)}\text{,}\]
    where \(\tilde{\alpha} \cdot x - \tilde{\beta}\) is the expected payoff from taking action \(\tilde{a}\) and \(\alpha_i \cdot x - \beta_i\) is the expected payoff from taking the optimal action for type \(x_i\) given her posterior \(\hat{x}_i\).

    As no \(t\) action decision problem exploits the agent, we must have (when \(\tilde{a}\) is removed) \[\label{eq2}\tag{\textcolor{OrangeRed}{\(\blacklozenge\)}}\sum_{i = 1}^{k} p_i (\alpha_i \cdot x_i - \beta_i) + {\color{SeaGreen}\sum_{i=k+1}^{s-1} p_i (\alpha_i \cdot x_i - \beta_i)} \geq 0\text{.}\]
    where \(\alpha_i \cdot x - \beta_i\) is the expected payoff, in the posterior \(x\), for type \(x_i\)'s optimal action in the set \(\left\{a^1, \dots, a^t, \varnothing\right\}\) given her belief \(\hat{x}_i\). Note that green expressions in Inequalities \ref{eq1} and \ref{eq2} are the same.
    
    Analogously, as our choice of an outside option of \(0\) was merely a normalization,
    \[\sum_{i = 1}^{k} p_i (\tilde{\alpha} x_i - \tilde{\beta}) + \sum_{i=k+1}^{s-1} p_i (\alpha_i \cdot x_i - \beta_i) + \sum_{i=s}^n p_i (\alpha_i \cdot x_i - \beta_i) \geq \tilde{\alpha} \cdot \mu - \tilde{\beta} \geq 0\text{.}\]
    
    Crucially, for each \(i \geq s\), \(\alpha_i \cdot \hat{x}_i - \beta_i \leq 0\). Thus, for each \(i \geq s\), \(\alpha_i \cdot x_i - \beta_i \leq 0\), which implies 
    \[\sum_{i = 1}^{k} p_i (\tilde{\alpha} x_i - \tilde{\beta}) + \sum_{i=k+1}^{s-1} p_i (\alpha_i \cdot x_i - \beta_i) \geq 0\text{,}\] a contradiction.\end{proof}
    It turns out that if we specialize to experiments with just two signal realizations, we can completely characterize updating behavior that leaves an agent vulnerable (or not) to exploitation.
    \begin{proposition}\label{binaryprop}
        Fix \(\mu\) and binary \(\pi\). An agent cannot be exploited if and only if she underreacts to information.
    \end{proposition}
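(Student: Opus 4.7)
The backward (``if'') direction is immediate from Theorem \ref{maintheorem}, so the plan is to tackle the forward direction via its contrapositive: if the agent fails to underreact at some realization, relabeled as $s_1$ so that $\hat{x}_1 \notin [\mu, x_1]$, then she can be exploited. A useful structural observation in the binary-signal setting is that $\mu = p_1 x_1 + p_2 x_2$ lies strictly inside the segment $[x_1, x_2]$, so $x_1, \mu, x_2$ are collinear. I would split into two subcases based on whether $\hat{x}_1 \in \conv\{x_1, x_2\}$. In the first subcase, $\hat{x}_1 \notin \conv\{x_1, x_2\}$ and Lemma \ref{mainlemma} immediately delivers exploitation.

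The second subcase is the substantive one: $\hat{x}_1 \in (\mu, x_2]$, so all four points lie on $\mathrm{aff}\{x_1, x_2\}$ in the order $x_1, \mu, \hat{x}_1, x_2$. I would mimic the single-action construction of Lemma \ref{mainlemma}, but with a more careful choice of separating hyperplane. Setting $v := x_2 - x_1 \in \mathbb{R}^{\Theta}$, straightforward inner-product calculations (using $\mu - x_1 = p_2 v$ and $\hat{x}_1 - \mu = t p_1 v$ for some $t \in (0,1]$) yield $v \cdot x_1 < v \cdot \mu < v \cdot \hat{x}_1 \leq v \cdot x_2$. The plan is then to pick any $\beta \in (v \cdot \mu,\, v \cdot \hat{x}_1)$ and offer the single-action menu $\{a\}$ with $u(a, \theta) := \lambda(v_\theta - \beta)$ for a scalar $\lambda > 0$, so that $\mathbb{E}_x u(a, \cdot) = \lambda(v \cdot x - \beta)$. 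By construction this expectation is strictly positive at $\hat{x}_1$ (so the $s_1$-agent strictly prefers $a$ to $\varnothing$) and strictly negative at both $x_1$ and $\mu$.

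The verification step is to check that, regardless of how the $s_2$-agent chooses, the ex ante payoff is a strictly negative multiple of $\lambda$. If she also takes $a$, the total is $\lambda(v \cdot \mu - \beta) < 0$ (using $p_1 x_1 + p_2 x_2 = \mu$); if she takes $\varnothing$, the total is $p_1 \lambda(v \cdot x_1 - \beta) < 0$. Either way, driving $\lambda \to \infty$ achieves exploitation, and any tie at $\hat{x}_2$ is handled by the principal-optimal tie-breaking rule. The main obstacle is precisely closing off the possibility that the $s_2$-agent's choice offsets the $s_1$ loss; the essential use of non-underreaction at $s_1$ is what produces the strict inequality $v \cdot \mu < v \cdot \hat{x}_1$, which makes the interval for $\beta$ non-empty and renders the ``both take $a$'' expected payoff negative. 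Under genuine underreaction this inequality would flip, aligning with Theorem \ref{maintheorem}.
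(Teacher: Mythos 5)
Your proposal is correct and follows essentially the same route as the paper: sufficiency from Theorem \ref{maintheorem}, Lemma \ref{mainlemma} to dispose of the case $\hat{x}_1 \notin \conv X$, and then a single-action decision problem built from a hyperplane strictly separating the ``skipped-over-the-prior'' posterior from $\conv\{x_1,\mu\}$. Your version merely spells out the collinearity computations and the two possible choices of the $s_2$-type more explicitly than the paper's terse argument does.
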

    \begin{proof}
        Sufficiency follows from Theorem \ref{maintheorem}. By Lemma \ref{mainlemma}, we need only check the situation in which one of the two posteriors \(\hat{x}_0\) can be strictly separated from \(\conv\left\{x_0, \mu\right\}\) by some hyperplane \(\alpha \cdot x = \beta\). However, in this instance, there exists a single-action decision problem, such that the agent's payoff is bounded above by \(\alpha \cdot \mu - \beta < 0\). \end{proof}
        This proposition levers the fact that if the agent's experiment is binary, and if an agent's posterior lies within the convex hull of the Bayesian posteriors, it is either an under-reaction, or it ``skips over the prior.'' As the proof reveals, such a skip renders the agent vulnerable. In the next section, we use this proposition to shed light on a broad class of updating rules.

\subsection{Systematic Distortions}

A special class of updating rules are those for which the posterior produced by the updating rule is a function of the Bayesian posterior, irrespective of the prior or the signal. 
\begin{definition}
    An updating rule systematically distorts beliefs if for any full-support prior \(\mu\), signal \(\pi\), and signal realization \(s\), the posterior produced by the agent's updating rule \(\hat{x}_s\) equals \(\varphi(x_s)\) for some function \(\varphi \colon \Delta(\Theta) \to \Delta(\Theta)\).
\end{definition}
\cite*{declippel} introduce this class of updating rules, which has subsequently been studied by, e.g., \cite*{chambers2023coherent} and \cite*{whitmeyer2023bayes}. \cite*{coarsebayes} explores a particular class of such systematic distortions.

\begin{definition}
    An agent is \textcolor{ForestGreen}{Exploitable} if for any \(K > 0\), there exists a full-support prior \(\mu\), a signal \(\pi\), and a decision problem yielding the agent an \textit{ex ante} expected payoff of \(-K\).
\end{definition}
We term an agent who is not exploitable \textcolor{ForestGreen}{\textit{Unexploitable}}. Repeating our earlier remark almost verbatim, and via the exact same logic,
\begin{remark}
    An agent is exploitable if and only if the principal can offer the agent a decision problem yielding her an \textit{ex ante} expected payoff of \(-K\) for some \(K > 0\).
\end{remark}
Then,
\begin{proposition}
    An agent whose updating rule systematically distorts beliefs is unexploitable if and only if for any full-support prior \(\mu\) and signal \(\pi\) she underreacts to information.
\end{proposition}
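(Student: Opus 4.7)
The plan is to handle the two directions separately. Sufficiency is immediate: if the agent underreacts to information for every full-support prior $\mu$ and signal $\pi$, then Theorem~\ref{maintheorem} applied at each such pair guarantees she cannot be exploited at that pair. Since being exploitable in the new sense means the existence of some $(\mu, \pi)$ at which the principal can drive her \textit{ex ante} payoff arbitrarily negative, unexploitability follows.

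For necessity I would argue the contrapositive, reducing to a binary experiment and invoking Proposition~\ref{binaryprop}. Suppose there exist a full-support prior $\mu$, a signal $\pi$, and a realization $s$ at which the agent fails to underreact. By systematic distortion, $\hat{x}_s = \varphi(x_s)$; writing $x := x_s$ and $z := \varphi(x)$, we have $z \notin [\mu, x]$. The idea is to construct a binary signal $(\mu', \pi')$ at which the \emph{same} function $\varphi$ still produces an underreaction failure. Proposition~\ref{binaryprop} applied to this binary instance then yields exploitability at $(\mu', \pi')$, which is what we need.

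The construction splits into two cases. In the generic case $x \neq \mu$, I would set $\mu' := \mu$ and choose $y$ on the ray from $x$ through $\mu$, strictly past $\mu$ and still inside $\Delta(\Theta)$; such a $y$ exists because $\mu \in \inter \Delta(\Theta)$. The binary signal with prior $\mu$ and Bayesian posteriors $\{x, y\}$ is then Bayes-plausible with both realizations having strictly positive probability, and by systematic distortion the distorted posterior at the $x$-realization is $\varphi(x) = z \notin [\mu, x]$, so underreaction there fails. The degenerate case $x = \mu$ (which forces $\varphi(\mu) \neq \mu$) requires actually changing the prior: pick any $\mu' \in \inter \Delta(\Theta)$ off the single ray out of $\mu$ on which $z$ would lie in $[\mu', \mu]$, and then take $y$ with $\mu'$ strictly between $\mu$ and $y$ in $\Delta(\Theta)$; the same argument applies at the realization whose Bayesian posterior equals $\mu$.

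I expect the main obstacle to be this degenerate case $x = \mu$, since it is the only place one cannot keep the prior fixed, and a brief geometric check is needed to ensure the chosen $\mu'$ and $y$ yield a Bayes-plausible binary signal with both realizations positive probability in all dimensions (including $n=2$, where the simplex is one-dimensional and the ``off the ray'' step has to be executed more carefully). The rest is simply the transportation of a single ``bad'' Bayesian posterior into a binary experiment so that the fixed distortion $\varphi$ continues to violate underreaction there, at which point Proposition~\ref{binaryprop} does all of the remaining work.
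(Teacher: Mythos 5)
Your proof is correct and follows essentially the same route as the paper: sufficiency is immediate from Theorem~\ref{maintheorem}, and necessity goes by contraposition, transporting the single realization at which underreaction fails into a binary experiment where the same systematic distortion $\varphi$ still misbehaves. The only cosmetic difference is that you invoke Proposition~\ref{binaryprop} (which also covers the ``skips over the prior'' case), whereas the paper chooses the second Bayesian posterior close enough to the prior that $\varphi(x_s)$ falls outside the convex hull and applies Lemma~\ref{mainlemma} directly; your version actually spells out the construction (including the degenerate $x_s=\mu$ case) that the paper leaves implicit.
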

\begin{proof}
    \(\left(\Leftarrow\right)\) Suppose for the sake of contraposition there exists a full-support prior \(\mu\) and signal \(\pi\) for which the agent does not underreact to information. Because the agent's updating rule systematically distorts beliefs, this implies there exists a binary experiment and a signal realization \(s\) for which \(\hat{x}_s\) does not lie within the closed convex hull of the set of Bayesian posteriors. Accordingly, Lemma \ref{mainlemma} implies that the agent can be exploited and so she is exploitable.

    \medskip

    \noindent \(\left(\Rightarrow\right)\) This is an immediate consequence of Theorem \ref{maintheorem}.
\end{proof}

\subsection{Uncertain Rules and Random Mistakes}

It is possible that the principal does not know the updating rule used by the agent, or that agent's updating rule maps at least one signal realization at random to posteriors. These two cases are formally equivalent: for each signal realization \(s \in S\), the agent's posterior is a random vector \(\hat{\mathcal{X}}_s\) distributed according to some \(\gamma_s\) on \(\Delta(\Theta)\). Recalling that \(X\) denotes the set of Bayesian posteriors and \(\conv X\) its closed convex hull, we have
\begin{lemma}
    Fix \(\mu\) and \(\pi\). If there exists a closed convex set \(B \subset \Delta(\Theta)\) disjoint from \(X\) for which \(\mathbb{P}(\hat{\mathcal{X}}_s) \in B > 0\), the agent can be exploited.
\end{lemma}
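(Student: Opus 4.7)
The plan is to mirror the proof of Lemma~\ref{mainlemma} by constructing a single-action decision problem that takes advantage of the positive-probability event $\{\hat{\mathcal{X}}_s \in B\}$. The hypothesis is most naturally read as $B$ being disjoint from $\conv X$ (disjointness from $X$ alone is too weak: a random posterior landing at the prior $\mu$---a point typically outside $X$ but inside $\conv X$---is merely underreaction, which Theorem~\ref{maintheorem} shows is unexploitable). Under that reading, strict separation of the disjoint closed convex sets $B$ and $\conv X$ (the latter compact as the convex hull of a finite set in $\mathbb{R}^{n-1}$) yields a hyperplane $H_{\alpha,\beta} = \{x \in \mathbb{R}^{n-1} : \alpha \cdot x = \beta\}$ with $\alpha \cdot b > \beta$ for all $b \in B$ and $\alpha \cdot x < \beta$ for all $x \in \conv X$.

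Next, for any $\lambda > 0$ I would offer the single-action decision problem $(\{a\}, u)$ with state payoffs $u(a, \theta) = \lambda(\alpha_\theta - \beta)$, so that the expected payoff at any belief $x$ equals $\lambda(\alpha \cdot x - \beta)$. By construction the agent strictly prefers $a$ to $\varnothing$ exactly when her posterior lies on the $B$-side of $H_{\alpha,\beta}$, and strictly prefers $\varnothing$ when on the $X$-side.

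To compute the ex ante expected payoff, I would decompose the randomness by signal realization and, at signal $s$, by whether $\hat{\mathcal{X}}_s \in B$. Conditional on $\hat{\mathcal{X}}_s \in B$ the agent takes $a$, contributing the strictly negative quantity $p(s)\,\mathbb{P}(\hat{\mathcal{X}}_s \in B)\,\lambda(\alpha \cdot x(s) - \beta)$ to the total. Every other contribution is nonpositive: if the agent rejects $a$ at some realization she contributes $0$, and if instead her distorted posterior at some $s' \in S$ happens to land on the $B$-side and she takes $a$, her contribution is $\lambda(\alpha \cdot x(s') - \beta) < 0$, which only helps the principal further. Scaling $\lambda$ then drives the ex ante payoff below $-K$ for any $K > 0$.

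The main thing to verify is that this construction is immune to the agent's arbitrary---possibly random---behavior at signal realizations other than $s$. This follows immediately from the uniform separation: because every Bayesian posterior $x(s') \in X$ lies strictly on the rejection side of $H_{\alpha,\beta}$, any action taken, at any signal and under any realization of the distorted posterior, yields strictly negative expected payoff when evaluated at the true Bayesian posterior. Hence the distributions $\gamma_{s'}$ for $s' \neq s$ can never rescue the agent, and the only role of the hypothesis is to guarantee that \emph{at least one} signal realization contributes a strictly negative term that can be amplified without bound.
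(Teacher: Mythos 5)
Your proof is correct and follows exactly the route the paper takes (and leaves largely implicit): strictly separate \(B\) from \(\conv X\), offer the single action with expected payoff \(\lambda(\alpha \cdot x - \beta)\), note that every contribution evaluated at a Bayesian posterior is nonpositive while the event \(\{\hat{\mathcal{X}}_s \in B\}\) contributes a strictly negative term, and scale \(\lambda\). Your observation that the hypothesis must be read as disjointness from \(\conv X\) rather than from \(X\) is well taken --- the paper's appeal to strict separation tacitly requires it, and the example \(B = \{\mu\}\) (pure underreaction, unexploitable by Theorem \ref{maintheorem}) shows the literal statement would otherwise fail.
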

\begin{proof}
By the strict separating hyperplane theorem, there exists a hyperplane that strictly separates \(B\) from \(X\). Using this, the remainder of the proof mirrors that for Lemma \ref{mainlemma} and so we omit it.
\end{proof}
It is also easy to derive an analog of Theorem \ref{maintheorem}.
\begin{corollary}
    Fix \(\mu\) and \(\pi\). If for each \(s\) \(\hat{\mathcal{X}}_s\) is supported on a finite subset of \(\conv\left\{x_s, \mu\right\}\), the agent cannot be exploited.
\end{corollary}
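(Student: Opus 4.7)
The plan is to reduce the random-updating setting to the deterministic-underreaction setting of Theorem \ref{maintheorem} by splitting each signal realization into finitely many deterministic sub-realizations.

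For each \(s \in S\), let \(\{r_{s,1}, \dots, r_{s,m_s}\}\) denote the finite support of \(\hat{\mathcal{X}}_s\) and let \(\gamma_{s,j} \coloneqq \mathbb{P}(\hat{\mathcal{X}}_s = r_{s,j})\). I would consider the enlarged signal \(\tilde\pi\) on \(\tilde S \coloneqq \{(s,j) \colon s \in S,\ 1 \leq j \leq m_s\}\) with \(\tilde\pi((s,j) \mid \theta) \coloneqq \pi(s \mid \theta)\gamma_{s,j}\), and stipulate that, upon observing \((s,j)\), the agent's now-deterministic posterior is \(r_{s,j}\). Because the random perturbation is independent of \(\theta\), the Bayesian posterior following \((s,j)\) is still \(x_s\), so the hypothesis \(r_{s,j} \in \conv\{x_s, \mu\}\) translates to underreaction at every sub-realization \((s,j)\). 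Since the principal observes neither \(s\) nor \(j\), the set of admissible decision problems and the \textit{ex ante} expected payoffs they induce are identical across the two formulations, so it suffices to show that the agent cannot be exploited in the enlarged, deterministic environment.

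To conclude, I would apply the argument of Theorem \ref{maintheorem} directly to this enlarged problem. The only subtlety is that the enlarged experiment need not satisfy the paper's standing affine-independence hypothesis on Bayesian posteriors: all sub-realizations \((s,j)\) for fixed \(s\) share the common Bayesian posterior \(x_s\), so the number of distinct types may exceed \(n\). The main obstacle is therefore verifying that affine independence plays no essential role in the proof of Theorem \ref{maintheorem}. Inspection shows that the induction is performed on the number of actions in the decision problem, and both the base case and the induction step (via Lemma \ref{prunelemma}) rely only on the underreaction property applied type-by-type together with the averaging identity \(\sum_i p_i x_i = \mu\); neither ingredient requires affine independence or any a priori bound on the number of types. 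The proof thus transfers verbatim to the enlarged setting, yielding that the agent's \textit{ex ante} payoff is weakly nonnegative and hence that exploitation is impossible.
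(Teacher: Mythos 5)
Your proposal is correct and is essentially the paper's own argument: the paper likewise redefines the agent's type to be her realized posterior (equivalently, your sub-realizations \((s,j)\)), observes that there are now finitely many types exceeding \(n\), and invokes Theorem \ref{maintheorem}. Your explicit check that the proof of Theorem \ref{maintheorem} and Lemma \ref{prunelemma} never uses affine independence of the Bayesian posteriors is a welcome bit of extra care that the paper leaves implicit, but it does not change the route.
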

\begin{proof}
    This result is exactly Theorem \ref{maintheorem} once we redefine an agent's type to be her realized posterior. Given this, there will obviously be more than \(n\) types (unless the updating is deterministic), but there are still only finitely many types. \end{proof}

\subsubsection{Rabin and Schrag's Model of Confirmatory Bias}\label{rabin}

One type of updating rule that does not fit our main specification, nor the extension of this subsection is \cite*{rabin1999first}'s model of confirmatory bias. A more general version of this is as follows: there are \(n\) states, a full-support prior \(\mu\) and a signal \(\pi\) that produces a collection of affinely-independent Bayesian posteriors. Then, for any signal realization \(s\), the agent's posterior is a random variable \(\hat{\mathcal{X}}_s\) supported on the Bayesian posteriors \(X = \left\{x_1, \dots, x_n\right\}\). For each \(s\), we let \(q_s \coloneqq \mathbb{P}(\hat{\mathcal{X}}_s = x_s) \in \left[0,1\right]\), i.e., \(q_s\) is the probability \(s\) is interpreted correctly. We call this class of updating rules \textcolor{ForestGreen}{Generalized Confirmatory Bias}.

Any error in this sense allows the principal to exploit the agent.
\begin{proposition}
    Fix \(\mu\) and \(\pi\). If the agent's updating rule is generalized confirmatory bias and there exists an \(s\) for which \(q_s < 1\), the agent can be exploited.
\end{proposition}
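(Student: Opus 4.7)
My plan is to mimic the single-action attack of Lemma \ref{mainlemma}, but instead of separating a posterior from \(\conv X\) by a hyperplane, I single out \emph{one} Bayesian posterior \(x_{s^\dagger}\) and design an action \(a\) that the agent takes exactly when her realized posterior coincides with \(x_{s^\dagger}\), while \(a\)'s true conditional expected payoff at every other signal realization is strictly negative. The hypothesis supplies the target: because \(q_{s^\star} < 1\) for some signal \(s^\star\), there exists a signal \(s^\dagger \neq s^\star\) with \(\mathbb{P}(\hat{\mathcal{X}}_{s^\star} = x_{s^\dagger}) > 0\), so the agent observing \(s^\star\) sometimes believes herself to be of type \(x_{s^\dagger}\) despite her true conditional distribution being \(x_{s^\star}\). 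This stray probability mass is precisely what the principal will exploit.

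To build \(a\), I use the \(s^\dagger\)-th barycentric coordinate \(\ell \colon \Delta(\Theta) \to \mathbb{R}\) (the unique affine functional with \(\ell(x_s) = \mathbf{1}\{s = s^\dagger\}\), which exists because \(X\) is affinely independent) and prescribe \(\mathbb{E}_x u(a, \cdot) = \lambda(\ell(x) - \theta)\) for parameters \(\theta \in (0,1)\) and \(\lambda > 0\). The main subtlety is translating this prescribed affine function of belief into an honest state-action payoff, but this is immediate: any affine function on \(\Delta(\Theta)\) is realized as \(x \mapsto \mathbb{E}_x v\) for some \(v \in \mathbb{R}^\Theta\). Then the agent takes \(a\) iff \(\ell(\hat{\mathcal{X}}_s) \geq \theta\), equivalently, iff \(\hat{\mathcal{X}}_s = x_{s^\dagger}\).

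The calculation that finishes the proof is straightforward: letting \(p_s \coloneqq \sum_\theta \mu(\theta)\pi(s|\theta)\) and using that the true expected payoff of \(a\) at signal \(s\) is \(\lambda(\mathbf{1}\{s = s^\dagger\} - \theta)\), the ex ante payoff simplifies to
\[
\lambda\Bigl[p_{s^\dagger} q_{s^\dagger} - \theta \sum_{s \in S} p_s\,\mathbb{P}(\hat{\mathcal{X}}_s = x_{s^\dagger})\Bigr].
\]
The sum strictly exceeds \(p_{s^\dagger} q_{s^\dagger}\) because it contains the extra strictly positive term \(p_{s^\star}\mathbb{P}(\hat{\mathcal{X}}_{s^\star} = x_{s^\dagger})\); hence the ratio \(p_{s^\dagger} q_{s^\dagger}\big/\sum_s p_s \mathbb{P}(\hat{\mathcal{X}}_s = x_{s^\dagger})\) lies in \([0,1)\), and choosing \(\theta\) strictly between this ratio and \(1\) makes the bracket a negative constant. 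Sending \(\lambda \to \infty\) then drives the ex ante payoff to \(-\infty\). Conceptually, this complements Lemma \ref{mainlemma}: no posterior need lie outside \(\conv X\), yet the confusion of two distinct Bayesian posteriors is itself a mistake in the ``wrong direction'' of the dichotomy and is, by itself, enough to sink the agent.
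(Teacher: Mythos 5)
Your proof is correct and follows essentially the same route as the paper's: a single-action decision problem whose expected payoff is an affine functional of the belief that is nonnegative only at the one Bayesian posterior \(x_{s^\dagger}\) onto which some signal realization is misread with positive probability, and strictly negative at every other vertex of \(\conv X\). The only difference is cosmetic: the paper uses a supporting hyperplane at the exposed point \(x_{s'}\) and leans on principal-preferred tie-breaking (the agent is exactly indifferent there), whereas your barycentric-coordinate construction with \(\theta \in (0,1)\) makes the incentive strict and verifies the net loss explicitly.
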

\begin{proof}
    Suppose there exists an \(s\) for which \(q_s < 1\). Then there exists an \(s' \neq s\) for which \(\mathbb{P}(\hat{\mathcal{X}}_s = x_{s'}) > 0\). As \(x_{s'}\) is an exposed point of \(X\), there exists a supporting hyperplane \(H_{\alpha, \beta} \coloneqq \left\{x \in \left.\mathbb{R}^{n-1} \right| \alpha \cdot x = \beta\right\}\) for which \(H_{\alpha, \beta} \cap \conv X = \left\{x_{s'}\right\}\). Accordingly, there exists an action yielding payoff \(\alpha \cdot x - \beta\), where \(\alpha \cdot x_s - \beta < 0\) for all \(s \neq s'\) and \(\alpha \cdot x_{s'} - \beta = 0\). As the agent's payoff is the average of a strictly negative number with \(0\), her expected payoff is strictly negative; that is, she can be exploited.\end{proof}

\bibliography{sample.bib}

\end{document}